\newtheorem{theorem}{Theorem}
\newtheorem{lemma}{Lemma}
\newtheorem{remark}{Remark}
\newtheorem{definition}{Definition}
\newtheorem{proposition}{Proposition}
\newtheorem{fact}{Fact}
\newcommand{\etal}{{\it et al.}}
\newcommand{\ie}{{\it i.e.}}
\newcommand{\eg}{{\it e.g.}}
\newcommand{\GF}[1]{\mathbb{F}_{#1}} 
\newcommand{\code}{\mathcal{C}} 
\newcommand{\codepts}{C} 
\newcommand{\codeptsj}[1]{C_{#1}} 
\newcommand{\codeptsinfoj}{I_{j}} 
\newcommand{\codep}{\mathcal{C}^{'}} 
\newcommand{\codeptsp}{C^{'}} 
\newcommand{\vect}[1]{\mathbf{#1}} 
\newcommand{\cw}{\mathbf{c}} 
\newcommand{\supp}[1]{\textsf{Supp}\left(#1\right)} 
\newcommand{\wt}[1]{\textsf{wt}\left(#1\right)} 
\newcommand{\dims}[1]{\textsf{dim}\left(#1\right)} 
\newcommand{\loc}[1]{\textsf{Loc}\left(#1\right)} 
\newcommand{\rnk}[1]{\textsf{rank}\left(#1\right)} 
\newcommand{\rep}[2]{{R}_{#1}\left(#2\right)} 
\newcommand{\gam}[1]{\Gamma\left(#1\right)} 
\newcommand{\subspace}[1]{\langle #1 \rangle} 
\newcommand{\nkd}{(n,k,d)} 
\newcommand{\nkdq}{[n,k,d]_q} 
\newcommand{\ri}{r_i} 
\newcommand{\cwi}{\vect{c}_i} 
\newcommand{\cwl}{\vect{c}_l} 
\newcommand{\cwset}[1]{\vect{c}_{#1}} 
\newcommand{\rprof}{\vect{r}} 
\newcommand{\kprof}{\vect{k}} 
\newcommand{\kprofreq}{\tilde{\vect{k}}} 
\newcommand{\kprofopt}{\kprof^{*}} 
\newcommand{\kprofp}{\kprof^{'}} 
\newcommand{\kprofpp}{\kprof^{''}} 
\newcommand{\ceillr}[1]{\left\lceil#1\right\rceil} 
\newcommand{\floorlr}[1]{\left\lfloor#1\right\rfloor} 
\newcommand{\setS}{S} 
\newcommand{\setT}{T} 
\newcommand{\setSi}[1]{\setS_{#1}} 
\newcommand{\setTi}[1]{\setT_{#1}} 
\newcommand{\si}{s_i} 
\newcommand{\ti}{t_i} 
\newcommand{\lj}[1]{l_{#1}} 
\newcommand{\kj}[1]{k_{#1}} 
\newcommand{\ktj}[1]{\tilde{k}_{#1}} 
\newcommand{\ksj}[1]{{k}^{*}_{#1}} 
\newcommand{\kpj}[1]{{k}^{'}_{#1}} 
\newcommand{\kppj}[1]{{k}^{''}_{#1}} 
\newcommand{\bj}[1]{\beta_{#1}} 
\newcommand{\gj}[1]{\gamma_{#1}} 
\newcommand{\jp}[1]{j_{#1}} 
\newcommand{\ej}[1]{\vect{e}_{#1}} 
\newcommand{\pj}[1]{\vect{p}_{#1}} 
\newcommand{\xS}[2]{#1|_{#2}} 
\newcommand{\ra}{r_a} 
\newcommand{\nj}[1]{n_{#1}} 
\newcommand{\nprof}{\vect{n}} 
\newcommand{\nhj}[1]{\hat{n}_{#1}} 
\newcommand{\nhpj}[1]{\hat{n}'_{#1}} 
\newcommand{\khj}[1]{\hat{k}_{#1}} 
\newcommand{\khpj}[1]{\hat{k}'_{#1}} 
\newcommand{\setSp}{S^{'}} 
\newcommand{\GFm}[1]{\mathbb{F}_{{#1}^m}} 
\newcommand{\GFnm}[1]{\mathbb{F}_{#1}^{N\times m}} 
\newcommand{\dr}[2]{d_R\left(#1,#2\right)} 
\newcommand{\m}{\vect{m}} 
\newcommand{\Nj}[1]{N_{#1}} 
\newcommand{\Npj}[1]{N^{'}_{#1}} 
\newcommand{\cwG}{\mathbf{c}_{\textrm{Gab}}} 
\newcommand{\cwGj}[1]{\mathbf{c}_{\textrm{Gab}}^{#1}} 
\title{Codes with Unequal Locality}
\author{Swanand Kadhe and Alex Sprintson
\institute{Electrical and Computer Engineering\\
Texas A\&M University}
\email{kswanand1@tamu.edu, spalex@tamu.edu}
}
\begin{document}
\maketitle

\begin{abstract}
For a code $\code$, its $i$-th symbol is said to have locality $r$ if its value can be recovered by accessing some other $r$ symbols of $\code$. Locally repairable codes (LRCs) are the family of codes such that every symbol has locality $r$.  

In this paper, we focus on (linear) codes whose individual symbols can be partitioned into subsets such that symbols in one subset have different locality than the ones in other. We call such codes as \emph{codes with unequal locality}. For codes with \emph{unequal information locality}, we compute a tight upper bound on the minimum distance as a function of number of information symbols of each locality. We demonstrate that the construction of Pyramid codes can be adapted to design codes with unequal information locality that achieve the minimum distance bound. This result generalizes the classical result of Gopalan \etal~for codes with unequal locality.
Next, we consider codes with \emph{unequal all symbol locality}, and establish an upper bound on the minimum distance as a function of number of symbols of each locality. We show that the construction based on rank-metric codes by Silberstein \etal~can be adapted to obtain codes with unequal all symbol locality that achieve the minimum distance bound. Finally, we introduce the concept of \emph{locality requirement} on a code, which can be viewed as a recoverability requirement on symbols. Information locality requirement on a code essentially specifies the minimum number of information symbols of different localities that must be present in the code. We present a greedy algorithm that assigns localities to information symbols so as to maximize the minimum distance among all codes that satisfy a given locality requirement.
\end{abstract}

\section{Introduction}
\label{sec:intro}
Coding for distributed storage has recently attracted significant research attention with a focus on the problem of recovery from storage node failures. The thrust has been on characterizing fundamental limits and designing associated coding schemes for one or more of the following metrics that are crucial in the node repair process: (a) {\it repair bandwidth} -- the amount of data downloaded during failed node repair~\cite{Dimakis:10, Dimakis:11}; (b) disk I/O -- the number of bits read from the nodes participating in the repair process~\cite{Tamo:Zigzag13,Khan:11}; and (c) {\it repair locality} -- the number of nodes participating in the repair process~\cite{Gopalan:12,Oggier:11}. 

In this paper, we focus on the metric of repair locality and a class of codes designed in the context of this metric, known as {\it locally repairable codes} (LRCs). Consider a block code of length $n$ that encodes $k$ information symbols. A symbol $i$ is said to have {\it locality} $\ri$ if it can be recovered by accessing $\ri$ other symbols in the code. We say that a code has {\it information locality} $r$ if each of its $k$ information symbols has locality at most $r$. Similarly, we say that a code has {\it all-symbol locality} $r$ if each of its $n$ symbols has locality at most $r$. 

Codes with small locality were introduced in~\cite{Huang:07, Han:07} (see also~\cite{Oggier:11}). The study of the locality property was galvanized with the pioneering work of Gopalan \etal~\cite{Gopalan:12}. One of their key contributions was to establish a trade-off between the minimum distance of a code and its information locality analogous to the classical Singleton bound. In particular, the authors showed that for a (scalar) linear $(n,k)$ code having information locality $r$, the minimum distance $d$ of the code is upper bounded as
\begin{equation}
\label{eq:Gopalan}
d \leq n - k - \ceillr{\frac{k}{r}} + 2.
\end{equation}
They also demonstrated that the Pyramid code construction in~\cite{Huang:07} achieves this bound.
Since then, a series of results have extended the code distance bound for a given locality for various types of codes along with corresponding {\it optimal} code constructions achieving the distance bound. We give a brief (far from complete) overview of some of these results below.

{\bf Related work:} 
The distance bound was generalized for codes with multiple local parities in~\cite{Prakash:12}, universal (scalar/vector linear, nonlinear) codes in~\cite{Papailiopoulos:14}, universal codes with multiple parities in~\cite{Rawat:14,Kamath:14}. An integer programming based bound was established in~\cite{Wang:15}. Almost all of these works also presented optimal code constructions. Furthermore, a large number of other optimal code constructions have been presented, see \eg,~\cite{Silberstein:13,Tamo:LRC13,Ernvall:14,TamoB:14,Goparaju:14,Song:14,Kuijper:14,Huang:15,ZehY:15,SilbersteinZ:15}. The notion of locality was extended to multiple recovery groups (also known as, availability) in~\cite{Rawat:Availability14,Tamo:Availability14}, and for the case of multiple failures, to sequential repair in~\cite{Prakash:14} and hierarchical repair in~\cite{Sasidharan:15}. The Singleton-like bound was extended to accommodate the alphabet size in~\cite{CadambeM:15}.

{\bf Our contributions:} 
In previous works, the locality of a code is characterized by a single parameter $r$. Inspired from the notion of {\it unequal error protection}, we are interested in investigating linear codes, in which, different subsets of symbols possess different localities. We refer to such codes as {\it codes with unequal locality}. For example, consider a $(15,11)$ code whose 4 information symbols have locality 2, 3 information symbols have locality 3, and 4 information symbols have locality 4 (with no constraint on the locality of parity symbols). Under the classical terminology, such a code would be characterized as a code with information locality 4. However, it is not clear if the distance bound given in~\eqref{eq:Gopalan} is tight for the case of unequal localities. Our main goal is to compute a tight upper bound on the minimum distance of such codes with unequal locality. 

Codes with unequal locality are practically appealing in scenarios when important information symbols, \eg, symbols of {\it hot data}, need to be repaired quickly; whereas, recovering less important symbols can involve more overhead. Moreover, these types of codes can be useful in reducing download latency for hot data. For instance, references~\cite{Joshi:14,KSS:15} study storage codes from queueing theoretic perspective to analyze download latency. 

Our key contributions are summarized as follows.  To characterize a code with unequal information locality, we define a notion of {\it information locality profile} of a code. We say that a code has an information locality profile \mbox{$\kprof = \{\kj{1},\ldots,\kj{r}\}$} if it contains $\kj{j}$ information symbols of locality $j$ for $1\leq j\leq r$. For example, a code having 5 information symbols of locality 2, and 6 information symbols of locality 4 would have an information locality profile $\{0,5,0,6\}$. 
For scalar linear codes, we establish an upper bound on the minimum distance as a function of  information locality profile $\kprof = \{\kj{1},\ldots,\kj{r}\}$ as follows (Theorem~\ref{thm:dist-bound-info-loc}).
\begin{equation}
\label{eq:bound-1}
d \leq n - k - \sum_{j=1}^{r}\ceillr{\frac{\kj{j}}{j}} + 2.
\end{equation} 
We demonstrate that the Pyramid code construction in~\cite{Huang:07} can be adapted to design unequal locality codes that are distance-wise optimal according to the bound above. 

When parity symbols also have locality constraints, we can analogously define an {\it all symbol locality profile} of a code. W say that a code has an all symbol locality profile $\nprof = \{\nj{1},\ldots,\nj{r}\}$ if it contains $\kj{j}$ information symbols of locality $j$ for $1\leq j\leq r$. For instance, consider a $(15,11)$ code that has 6 symbols of locality 2, 4 symbols of locality 3, and 5 symbols of locality 4. Its all symbol locality profile would be $\{0,6,4,5\}$. We compute an upper bound on the minimum distance for scalar linear codes as a function of all symbol locality profile, which has the following form (Theorem~\ref{thm:all-symbol-loc-bound}).\footnote{In an parallel and independent work, Zeh and Yaakobi~\cite{ZehY:16} also consider the problem of computing a bound on minimum distance of codes with unequal all symbol locality, referred in their work as multiple locality codes. Their bound~\cite[Theorem 8]{ZehY:16} has a similar form as we get. In addition,~\cite{ZehY:16} extends Cadambe-Mazumdar bound in~\cite{CadambeM:15} for codes with multiple localities, and present several optimal code constructions.}
\begin{equation}
\label{eq:bound-2}
d \leq n - k + 2 - \sum_{j=1}^{r-1}\ceillr{\frac{\nj{j}}{j+1}} - \ceillr{\frac{k - \sum_{i=1}^{r-1}\left(\nj{j} - \ceillr{\frac{\nj{j}}{j+1}}\right)}{r}}.
\end{equation}
We adapt the construction in~\cite{Silberstein:13}, which uses a maximum rank distance (MRD) code as an outer code and a maximum distance separable (MDS) code as inner code, to construct codes with unequal all symbol locality that are optimal with respect to the above bound.

Finally, we introduce a concept of {\it information locality requirement}. To motivate this, consider a scenario where we need to design a linear code of dimension $k=11$ such that $\ktj{3} = 5$ information symbols must have locality at most $3$, and the remaining $\ktj{4} = 6$ information symbols must have locality at most $4$. Collectively, we can specify this as a locality requirement of $\kprofreq = \{0,0,5,6\}$. Notice that this is equivalent to a requirement as a code must contain at least 5 symbols of locality up to 3, and at least 11 symbols of locality up to 4. In general, a locality requirement of $\kprofreq = \{\ktj{1},\ldots,\ktj{r}\}$ means that a code should contain at least $\sum_{i=1}^{i}\ktj{j}$ symbols of locality up to $i$ for each $1\leq i\leq r$, or, in other words, $\ktj{j}$ information symbols should have locality at most $j$. 

One can design codes with various information locality profiles that would satisfy this requirement. For examples, the locality requirement of $\kprofreq = \{0,0,5,6\}$ is satisfied by locality profiles $\{5,6\}$, $\{0,0,5,6\}$, $\{0,2,9\}$, $\{1,0,6,4\}$, etc. The question what is the maximum value of minimum distance any code with this locality requirement would attain, and can we find an {\it optimal} locality profile which achieves this distance? Note that locality requirement can be viewed as a recoverability requirement for code design. We give a simple greedy algorithm which computes an information locality profile given an information locality requirement.

\section{Preliminaries}
\label{sec:basics}

\subsection{Notation}
\label{sec:notation}
We use the following notation.
\begin{enumerate}
\item For an integer $l$, $[l] = \{1,2,\ldots,l\}$;
\item For a vector $\vect{x}$ and an integer $i$, $\vect{x}(i)$ denotes the $i$-th coordinate of $\vect{x}$, for a matrix $H$ and integers $i, j$, $H(i,j)$ denotes the element in row $i$ and column $j$;
\item For a vector $\vect{x}$, $\supp{\vect{x}}$ denotes its support, \ie, $\supp{\vect{x}} = \{i : \vect{x}(i) \neq 0\}$;
\item For a vector $\vect{x}$, $\wt{\vect{x}}$ denotes its Hamming weight, \ie, $\wt{x} = |\supp{\vect{x}}|$;
\item For vectors $\vect{x}$ and $\vect{y}$, $\vect{x}\cdot\vect{y}$ denotes their dot product;
\item For a set of vectors $\vect{x}_1, \ldots, \vect{x}_m$, $\subspace{\vect{x}_1, \ldots, \vect{x}_m}$ denotes their span, whereas for a matrix $H$, $\subspace{H}$ denotes its row space;
\item For a  vector space $\mathcal{A}$, $\dims{\mathcal{A}}$ denotes its dimension;
\item For a matrix $H$, $\rnk{H}$ denotes the rank of $H$.
\end{enumerate}

\subsection{Codes with Locality}
\label{sec:basics}
Let $\code$ denote a linear $\nkdq$ code over $\GF{q}$ with block-length $n$, dimension $k$, and minimum distance $d$. Let $\cw$ denote a codeword in $\code$. The code can be represented by a set of $n$ (column) vectors $\codepts = \left\{\vect{c}_1,\ldots,\vect{c}_n\right\} \in \GF{q}^k$. The set of vectors must have rank $k$ for $\code$ to have dimension $k$. The $i$-th vector $\cwi$ is referred to as the $i$-th coordinate of $\code$. For any codeword $\cw\in\code$, $\cw(i)$ is said to be the $i$-th symbol of the codeword $\cw$. In the context of locality, we use the terms symbol or coordinate interchangeably. Our main focus is on systematic codes, and we assume that the first $k$ coordinates correspond to the information symbols.

We say that the $i$-th coordinate of a code $\code$ has locality $\ri$ if its value can be recovered from some other $\ri$ coordinates of $\code$. The formal definition of locality is as follows.

\begin{definition}
\label{def:locality}
[Locality] For $\cwi\in\code$, we define $\loc{\cwi}$ to be the smallest integer $\ri$ such that there exists a subset $\rep{}{i}\subset[n]\setminus\{i\}$, $|\rep{}{i}|\leq \ri$, such that $\cwi = \sum_{l\in\rep{}{i}}\lambda_{l}\cwl$, where $\lambda_l \in\GF{q}$ $\forall\: l\in\rep{}{i}$. 
\end{definition}
Note that, if the minimum distance of the code is more than two, then every coordinate has locality at most $k$. 

We say that an $(n,k)$ code has {\it information locality} $r$ if each of its $k$ information symbols has locality at most $r$. Similarly, we say that an $(n,k)$ code has {\it all symbol locality} $r$ if each of its $n$  symbols has locality at most $r$.

\section{Codes with Unequal Information Locality}
\label{sec:info-loc}
In this section, we are interested in systematic codes, whose information symbols can be partitioned into disjoint subsets in such a way that the symbols in one subset have different locality than the symbols in other subset. We say that such codes possess unequal information locality. We can characterize the locality of such codes by listing the locality values of each information symbol. Alternatively, we can consider the list of cardinalities of subset of each locality. We call such a list as the {\it information locality profile} of the code. Formally, the definition is as follows. 

\begin{definition}
\label{def:info-loc-prof}
[Information Locality Profile] Given a systematic $\nkdq$ code $\code$, the information locality profile of $\code$ is defined as a length-$k$ vector $\rprof(\code) = \{r_1,\ldots,r_k\}$, where $r_i$ is the locality of the $i$-th information coordinate of $\code$. Note that $1\leq \ri \leq k$ for each $i\in[k]$, assuming $d\geq2$. 

Alternatively, we can specify the locality profile of $\code$ as a length-$r$ vector $\kprof(\code) = \{\kj{1}, \ldots, \kj{r}\}$, where $r = \max\{r_1, \ldots, r_k\}$ and $\kj{j}$ is the number of information coordinates of locality $j$ for $j\in[r]$. Note that $\forall j\in[r]$, $0\leq k_j\leq k$, $\kj{r} \geq 1$ and $\sum_{j=1}^{r} \kj{j} = k$. 
\end{definition}

\begin{remark}
\label{rem:info-loc-prof}
For a code $\code$ with representation $\codepts$, we can choose any subset of $k$ full-rank coordinates of $\codepts$ to represent information symbols. Without loss of generality, we can always choose the coordinates having smallest overall locality as information coordinates. More specifically, for $1\leq j\leq r$, let $\codeptsj{j}\subset\codepts$ be the subset of coordinates having locality $j$. Set $\codeptsj{0} = \emptyset$. Let 
\begin{equation}
\label{eq:k-j}
\kj{j} =  \rnk{\cup_{i=0}^{j}\codeptsj{i}} - \rnk{\cup_{i=0}^{j-1}\codeptsj{i}}. 
\end{equation}
In other words, $\sum_{i=1}^{j}\kj{i}$ is the rank of the sub-matrix formed by the coordinates having locality up to $j$. Starting with $j=1$, we choose a subset $\codeptsinfoj \subset \codeptsj{j}$ of $\kj{j}$ linearly independent coordinates to represent $\kj{j}$ information symbols, and continue incrementing $j$ till the total rank is $k$. 
\end{remark}

\begin{remark}
\label{rem:loc-prof-classical}
In the classical notion of locality defined by Gopalan \etal~\cite{Gopalan:12}, technically, every symbol can have different locality. However, the (information) locality of a code is parameterized by a single value $r$, which is the largest locality of an (information) symbol. On the other hand, we parameterize the information locality using a length-$k$ vector that specifies the locality of each individual information symbol. We are interested in characterizing a trade-off between the minimum distance of a code and its locality profile vector.
\end{remark}

\subsection{Bound on the Minimum Distance}
\label{sec:dist-bound}
Consider a class of systematic linear codes having an information locality profile $\kprof = \{\kj{1},\ldots,\kj{r}\}$. We are interested in finding an upper bound on the minimum distance as a function of the code length, dimension, and information locality profile. This would be a generalization of the result in~\cite{Gopalan:12} for codes with unequal localities for information symbols.

\begin{theorem}
\label{thm:dist-bound-info-loc}
For any linear code with block-length $n$, dimension $k$, and information locality profile $\kprof = \{k_1,\ldots,k_r\}$, we have
\begin{equation}
\label{eq:dist-bound-info-loc}
d \leq n - k - \sum_{j = 1}^{r} \ceillr{\frac{k_j}{j}} + 2.
\end{equation}
\end{theorem}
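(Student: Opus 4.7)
My plan is to generalize the argument of Gopalan~\etal{} that establishes the classical bound~\eqref{eq:Gopalan}. Let $G$ be the $k \times n$ generator matrix of $\code$ with columns $\codepts = \{\cwset{1},\ldots,\cwset{n}\}$, and for $S \subseteq [n]$ let $G|_S$ denote the submatrix formed by the columns indexed by $S$. I use the standard characterization
\[
d \;=\; n \;-\; \max\bigl\{|S| : S \subseteq [n],\ \rnk{G|_S} \leq k - 1\bigr\},
\]
so it suffices to exhibit a coordinate set $\Gamma \subseteq [n]$ with $\rnk{G|_\Gamma} \leq k - 1$ and $|\Gamma| \geq k + \sum_{j=1}^{r} \ceillr{\kj{j}/j} - 2$; the bound on $d$ follows immediately.

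I will construct $\Gamma$ by an iterative greedy procedure. Starting from $\Gamma_0 = \emptyset$, at step $s$ I pick an information symbol $i_s$ whose column $\cwset{i_s}$ is not yet in the span of $\{\cwset{l} : l \in \Gamma_{s-1}\}$, and set $\Gamma_s = \Gamma_{s-1} \cup \rep{}{i_s} \cup \{i_s\}$. Because $\cwset{i_s}$ lies in the span of $\{\cwset{l} : l \in \rep{}{i_s}\}$, the number of newly added coordinates strictly exceeds the rank increment, so the ``defect'' $|\Gamma_s| - \rnk{G|_{\Gamma_s}}$ grows by at least $1$ per step. I process the information symbols in increasing order of locality, so that in stage $j$ (processing locality-$j$ symbols) each step raises the rank by at most $j$.

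The central claim is that the total number of steps $T$ satisfies $T \geq \sum_{j=1}^{r} \ceillr{\kj{j}/j}$. The argument rests on two facts: (a) by the end of stage $j$ all $\sum_{j' \leq j} \kj{j'}$ information symbols of locality at most $j$ must lie in the current span of $\Gamma$, so the cumulative rank after stage $j$ is at least $\sum_{j' \leq j} \kj{j'}$; (b) each step in stage $j'$ adds at most $j'$ to the rank, so the cumulative rank after stage $j$ is at most $\sum_{j' \leq j} j' \beta_{j'}$, where $\beta_{j'}$ is the number of stage-$j'$ steps. These give $\sum_{j' \leq j} j' \beta_{j'} \geq \sum_{j' \leq j} \kj{j'}$ for every $j$; combined with a careful selection rule for which locality-$j$ symbol to process (so that the added local group produces fresh rank rather than piggybacking on previously spanned columns, thereby enforcing $\ceillr{\kj{j}/j}$ genuine stage-$j$ steps), this yields the per-stage estimate $\beta_j \geq \ceillr{\kj{j}/j}$ and hence the desired lower bound on $T$.

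Finally, as in Gopalan~\etal's proof, the last step is performed partially: the elements of the final local group $\rep{}{i_T} \cup \{i_T\}$ are added one at a time to $\Gamma$ and the procedure halts as soon as the rank reaches $k - 1$. The first $T - 1$ full steps contribute at least $T - 1$ to the defect, while the partial last step adds $k - 1 - \rnk{G|_{\Gamma_{T-1}}}$ further coordinates without changing it, so $|\Gamma| \geq k + T - 2 \geq k + \sum_{j} \ceillr{\kj{j}/j} - 2$ with $\rnk{G|_\Gamma} = k - 1$, giving $d \leq n - k - \sum_{j} \ceillr{\kj{j}/j} + 2$. I anticipate that the most delicate step will be establishing $\beta_j \geq \ceillr{\kj{j}/j}$: the naive accounting ``each stage-$j$ step covers at most $j$ locality-$j$ symbols'' can fail when the current span of $\Gamma$ contains non-coordinate combinations that collapse several standard basis vectors together, so a single local group could in principle cover many information symbols at once, and the correct proof must exploit the freedom in ordering to avoid this degeneracy.
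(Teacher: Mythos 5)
Your proposal reproduces the paper's own proof: your greedy procedure is exactly the paper's Algorithm~\ref{alg:find-S}, and your defect accounting (each full step adds at least one to the defect, the truncated last step does not) is the paper's Case~1/Case~2 split. The difficulty you flag with establishing $\beta_j\geq\ceillr{\kj{j}/j}$ is real if one works only from the prefix inequalities $\sum_{j'\leq j}j'\beta_{j'}\geq\sum_{j'\leq j}\kj{j'}$, but it requires no careful selection rule to repair: the paper closes it (Remark~\ref{rem:info-loc-prof} together with~\eqref{eq:rank-S-lj}) by observing that a minimal repair group of a locality-$j$ symbol consists only of coordinates of locality at most $j$, so $\Gamma$ at the end of stage $j$ is contained in $\cup_{p\leq j}\codeptsj{p}$; consequently its rank is \emph{exactly} $\sum_{p\leq j}\kj{p}$, the stage-$j$ rank increment is exactly $\kj{j}$, and the ceiling bound follows directly regardless of which locality-$j$ symbol is processed at each step.
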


\begin{proof}
We build on the proof technique proposed in~\cite{Gopalan:12}. The idea is to construct a large set $\setS\subseteq\codepts$ such that $\rnk{\setS} \leq k - 1$, and then use the following fact.
\begin{fact}
\label{fact}
(\cite{Gopalan:12}) The code $\code$ has minimum distance $d$ if and only if for every $\setS\subseteq\codepts$ such that $\rnk{\setS} \leq k - 1$, we have
\begin{equation}
\label{eq:fact}
|\setS| \leq n - d.
\end{equation}
\end{fact}

Recall that $\rep{}{i}$ denotes a repair group of $\cwi$, and we have $|\rep{}{i}| = \loc{\cwi}$. Define $\gam{i} := \{i \cup \rep{}{i}\}$. Further, for any subset $\setT\subseteq[n]$, define $\cwset{\setT} = \{\cwi\in\codepts : i\in\setT\}$. 

\begin{algorithm}[!t]
\caption{Construct set $\setS\subseteq\codepts$ such that $\rnk{\setS} \leq k - 1$}
\label{alg:find-S}
\begin{algorithmic}[1]
\STATE Let $\setSi{0} = \emptyset$, $i = 1$
\WHILE{$\rnk{\setSi{i-1}}\leq k - 2$}
	\STATE{Pick a coordinate $\cwi\in\codepts\setminus\setSi{i}$ having smallest locality} 
	\label{pick-ci}
	\IF{$\rnk{\setSi{i-1} \cup \cwset{\gam{i}}} < k$}	\label{condition}
		\STATE{Set $\setSi{i} = \setSi{i-1} \cup \cwset{\gam{i}}$}
	\ELSE
		\STATE{Pick $\gam{i}' \subset \gam{i}$ such that $\rnk{\setSi{i-1}\cup\cwset{\gam{i}'}} = 				      k -1$}	
		\STATE{Set $\setSi{i} = \setSi{i-1}\cup\cwset{\gam{i}'}$}
	\ENDIF
\STATE{Increment $i$}
\ENDWHILE
\end{algorithmic}
\end{algorithm}

We use Algorithm~\ref{alg:find-S} to construct a set $\setS$ such that $\rnk{\setS} < k$. First, note that in line~\ref{pick-ci}, as $\rnk{\setSi{i-1}} \leq k - 2$, and there are $k$ (linearly independent) information symbols, there exists a coordinate $\cwi\notin\setSi{i-1}$. 

Our goal is to find a lower bound on $|\setS|$. Let $l$ be the total number of iterations of Algorithm~\ref{alg:find-S}. Observe that $|\setS| = |\setSi{l}|$. Further, the final set $\setSi{l}$ has $\rnk{\setSi{l}} = k-1$. We define the increment in the size and rank of set $\setSi{i}$ in the $i$-th iteration as follows.
\begin{equation}
\label{eq:si-ti}
\si = |\setSi{i}| - |\setSi{i-1}|, \quad  \ti = \rnk{\setSi{i}} - \rnk{\setSi{i-1}}.
\end{equation}
Note that
\begin{equation}
\label{eq:S-l-total-rank}
|\setSi{l}| = \sum_{i = 1}^{l}\si, \quad \rnk{\setSi{l}} = \sum_{i=1}^{l}\ti = k - 1.
\end{equation}

We consider two cases depending on whether Algorithm~\ref{alg:find-S} reaches the condition in line~\ref{condition}, \ie, $\rnk{\setSi{i-1} \cup \cwset{\gam{i}}} = k$. We note that the condition can be reached only in the last iteration. 

{\bf Case 1:} Suppose we have $\rnk{\setSi{i-1} \cup \cwset{\gam{i}}} \leq k-1$ throughout. Now, in the $i$-th iteration, we add $\cwset{\gam{i}}$ to $\setS$. Thus, $\si \leq \loc{\cwi} + 1$. Further, vectors in $\cwset{\gam{i}}\setminus\setSi{i-1}$ are such that they yield a (possibly zero) vector in $\subspace{\setSi{i-1}}$. Therefore, 
\begin{equation}
\label{eq:ti-bound}
\ti \leq \si - 1 \leq \loc{\cwi}.
\end{equation} 
Using this, we can write
\begin{equation}
\label{eq:S-case-1}
|\setS| = \sum_{i=1}^{l}\si \geq \sum_{i=1}^{l}(\ti+1) = k - 1 + l,
\end{equation}
where the last equality follows from~\eqref{eq:S-l-total-rank}. 

{\it Lower bounding the number of iterations.} Now, to find a lower bound on $|\setS|$, we find a lower bound on $l$. Let $m$ be the locality of the last symbol collected by Algorithm~\ref{alg:find-S}, where $m\in[r]$.
For $1\leq j\leq m$, let $\lj{j}$ be the number of iterations in which Algorithm~\ref{alg:find-S} picks  coordinates of locality $j$. Note that, if $\code$ does not contain any symbol of a particular locality $j$, we set $\lj{j} = 0$. Thus, for each $j$, $0\leq \lj{j}\leq l$, and $l = \sum_{j=1}^{m} \lj{j}$. 

Recall that $\codeptsj{j}\subset\codepts$ is the set of coordinates of locality $j$ (see Remark~\ref{rem:info-loc-prof}). Since the algorithm collects all coordinates of locality up to $j$ before collecting any coordinate of locality $j+1$ for $1\leq j\leq m-1$, we have $\setSi{\sum_{p=1}^{j}l_p} = \cup_{p=1}^j\codeptsj{p}$. Therefore, from~\eqref{eq:k-j}, $\rnk{\setSi{l_1}} = \kj{1}$ and for $2\leq j\leq m-1$, $\rnk{\setSi{\sum_{p=1}^{j}l_p}} - \rnk{\setSi{\sum_{p=1}^{j-1}l_p}} = \kj{j}$. This results in 
\begin{equation}
\label{eq:rank-S-lj}
\rnk{\setSi{\sum_{p=1}^{j}l_p}} = \sum_{p=1}^{j}\kj{p}, \quad \textrm{for}\:\:1\leq j\leq m-1.
\end{equation} 
The above two results can be interpreted as follows. The increment in the rank of $\setS$ by collecting all the coordinates of locality $j$ is $\kj{j}$ for $1\leq j\leq m-1$. The rank of $\setS$, when it contains all the coordinates of locality up to $j$, is $\sum_{p=1}^{j}\kj{p}$.

When the algorithm terminates, it may not have collected all the coordinates of locality $m$. Let $\kj{m}'$ be the increment in the rank of $\setS$ by the coordinates of locality $m$ that are collected by the algorithm. Note that $1\leq \kj{m}' \leq \kj{m}$. 

Note that $\rnk{\setSi{l}} = \rnk{\setSi{\sum_{j=1}^{m-1}\lj{j}}} + \kj{m}'$. Using the fact that $\rnk{\setSi{l}} = k-1$ and~\eqref{eq:rank-S-lj}, we get $k-1 = \sum_{j=1}^{m-1}\kj{j} + \kj{m}'$. On the other hand, by definition of locality profile vector, we have $\sum_{j=1}^{r}\kj{j} = k$. We consider two cases.

Case (1a): $\kj{r} \geq 2 1$. Then, it must be that $m=r$ and $\kj{m}' = \kj{r} - 1$ since $1\leq \kj{m}' \leq \kj{m}$. 

Case (1b): $\kj{r} = 1$. Then, it follows that $m=r-1$, and $\kj{m}' = \kj{r-1}$ since $1\leq \kj{m}' \leq \kj{m}$.

In summary, for $1\leq j\leq r-1$, the increment in the rank of $\setS$ by collecting the coordinates of locality $j$ is $\kj{j}$. The increment in the rank of $\setS$ by locality $r$ coordinates is $\kj{r}-1$. (Note that this holds for Case (b) as well.) Moreover, for each $1\leq j \leq r$, when the algorithm is collecting the coordinates of locality $j$, the rank can increase by at most $j$ in each step (see~\eqref{eq:ti-bound}). Therefore, $\lj{j} \geq \ceillr{\frac{\kj{r}-1}{r}}$ for $1\leq j\leq r-1$ and $l_r \geq \ceillr{\frac{\kj{r}-1}{r}}$.

Combining this with $l = \sum_{j=1}^r\lj{j}$ gives,
\begin{equation}
\label{eq:l-LB-case-1}
l \geq \sum_{j=1}^{r-1}\ceillr{\frac{\kj{j}}{j}} + \ceillr{\frac{\kj{r}-1}{r}}.
\end{equation}

Substituting this into~\eqref{eq:S-case-1}, we get
\begin{IEEEeqnarray}{rCl}
|\setS| & \geq & k - 1 + \sum_{j=1}^{r-1}\ceillr{\frac{\kj{j}}{j}} + \ceillr{\frac{\kj{r}-1}{r}}\\
\label{eq:S-case-1-final}
& \geq & k - 2 + \sum_{j=1}^{r}\ceillr{\frac{\kj{j}}{j}}.
\end{IEEEeqnarray}

{\bf Case 2:} In the last step, we get $\rnk{\setSi{l-1} \cup \cwset{\gam{l}}} = k$.
For $1\leq i\leq l-1$, in the $i$-th iteration, we add $\cwset{\gam{i}}$. Thus, $\si \leq \loc{\cwi} + 1$. Further, vectors in $\cwset{\gam{i}}\setminus\setSi{i-1}$ are such that they yield a (possibly zero) vector in $\subspace{\setSi{i-1}}$. Therefore, for $1\leq i\leq l-1$, we get $\ti \leq \si - 1 \leq \loc{\cwi}$. 
In the last step $l$, we add $\cwset{\gam{l}'} \subset \cwset{\gam{l}}$. This increments $\rnk{\setS}$ by $t_l \geq 1$ (since $\rnk{\setSi{l-1}}\leq k - 2$), and $|\setS|$ by $s_l \geq t_l$. Therefore, we have
\begin{equation}
\label{eq:S-case-2}
|\setS| = \sum_{i = 1}^{l}\si \geq \sum_{i = 1}^{l-1}(\ti +1) + t_l = k - 1 + l - 1,
\end{equation}
the last equality follows from~\eqref{eq:S-l-total-rank}.  

{\it Lower bounding the number of iterations.} Similar to Case 1, in each iteration $i$ (including the last one), we have $\ti \leq \loc{\cwi}$. The only difference from Case 1 is that $\setS$ accumulates total rank of $k$ instead of $k-1$. Therefore, to lower bound $l$, we can use the same arguments as in Case 1 along with $\rnk{\setSi{l}}=k$ to obtain $l \geq \sum_{j=1}^{r-1}\ceillr{\frac{\kj{j}}{j}} + \ceillr{\frac{\kj{r}}{r}}$ in place of~\eqref{eq:l-LB-case-1}. Substituting this into~\eqref{eq:S-case-2} yields $|\setS| \geq k - 2 + \sum_{j=1}^{r}\ceillr{\frac{\kj{j}}{j}}$ (which is same as~\eqref{eq:S-case-1-final}).

Finally, noting that  $|\setS| \leq n - d$ from Fact~\ref{fact} and using this lower bound on $|\setS|$ gives~\eqref{eq:dist-bound-info-loc}.
\end{proof}

\subsection{Code Construction: Pyramid Codes}
\label{sec:pyramid-codes}
We show that the {\it parity splitting} construction of the Pyramid codes~\cite{Huang:07} can be adapted to obtain codes with unequal information locality, that are optimal with respect to~\eqref{eq:dist-bound-info-loc}. Consider an information locality profile $\kprof = \{\kj{1},\ldots,\kj{r}\}$. Let $\{\jp{1},\ldots,\jp{m}\}$ with $\jp{1} < \cdots < \jp{m}$ be the $m (\leq r)$ localities such that $\kj{\jp{p}} > 0$. We begin with a $(k+d-1,k,d)$ systematic maximum distance separable (MDS) code $\codep$. Let the representing coordinates be $\codeptsp = \{\ej{1},\ldots,\ej{k},\pj{0},\ldots,\pj{d-2}\}$, where $\ej{j}$ is the $j$-th column of a $k\times k$ identity matrix, and $\pj{j}$ for $0\leq j\leq d-2$ are the columns representing the parity coordinates. 

We partition the set $[k]$ into $m$ disjoint subsets $\setSi{1}, \ldots, \setSi{m}$ such that $|\setSi{p}| = \kj{\jp{p}}$ for each $p\in[m]$. Next, partition each subset $\setSi{p}$ into $\lj{p} = \ceillr{\frac{\kj{\jp{p}}}{\jp{p}}}$ disjoint subsets each of size at most $\jp{p}$. That is, $\setSi{p} = \cup_{i=1}^{\lj{p}}\setSi{p,i}$. For a vector $\vect{x}$ of dimension $k$, and a set $\setS\subseteq[k]$, let $\xS{\vect{x}}{\setS}$ denote the $|\setS|$-dimensional restriction of $\vect{x}$ to the coordinates  in set $\setS$. Then, we define the systematic code $\code$ with the following representation.
\begin{equation}
\label{eq:pyramid-code}
\codepts = \left\{\ej{1}, \ldots, \ej{k}, \xS{\pj{0}}{\setSi{1,1}}, \ldots, \xS{\pj{0}}{\setSi{1,\lj{1}}}, \xS{\pj{0}}{\setSi{2,1}}, \ldots, \xS{\pj{0}}{\setSi{2,\lj{2}}}, \ldots, \xS{\pj{0}}{\setSi{m,1}}, \ldots, \xS{\pj{0}}{\setSi{m,\lj{m}}},\pj{1},\ldots,\pj{d-2}\right\}.
\end{equation}

Note that we have {\it split} the parity $\pj{0}$ into $\sum_{j=1}^{r}\ceillr{\frac{\kj{j}}{j}}$ parities. Therefore, $n = k+d-2+\sum_{j=1}^{r}\ceillr{\frac{\kj{j}}{j}}$. It is easy to verify that parity splitting does not affect the distance, and hence, the code $\code$ has distance $d$. Since $\codep$ is an MDS code, we have $\wt{\pj{0}} = k$. Therefore, a set of $\kj{\jp{p}}$ information coordinates and $\ceillr{\frac{\kj{\jp{p}}}{\jp{p}}}$ parity coordinates  have locality at most $\jp{p}$ for each $p\in[m]$. Similar to the classical Pyramid codes in~\cite{Huang:07}, the last $d-2$ parity symbols may have locality as large as $k$. 

\section{Codes with Unequal All Symbol Locality}
\label{sec:all-symbol}
In this section, we extend the notion of information locality to profile to accommodate the codes whose parity symbols also have locality constraints. In this case, code symbols can be partitioned into disjoint subsets according to their locality, with maximum locality $\ra < k$. We define {\it all symbol locality profile} of a code as follows.

\begin{definition}
\label{def:loc-prof}
[All Symbol Locality Profile] Given an $\nkdq$ code $\code$, the all symbol locality profile of $\code$ is defined as a length-$n$ vector $\rprof(\code) = \{r_1,\ldots,r_n\}$, where $r_i$ is the locality of the $i$-th  coordinate of $\code$. Note that $1\leq \ri \leq k$ for each $i\in[n]$, assuming $d\geq 2$. 

Alternatively, we can specify the locality profile of $\code$ as a length-$\ra$ vector $\nprof(\code) = \{\nj{1}, \ldots, \nj{\ra}\}$, where $\ra = \max\{r_1, \ldots, r_n\}$ and $\nj{j}$ is the number of information coordinates of locality $j$ for $j\in[\ra]$. Note that $\forall j\in[\ra]$, $0\leq \nj{j}\leq n$, $\nj{\ra} \geq 1$ and $\sum_{j=1}^{\ra} \nj{j} = n$. 
\end{definition}

\begin{remark}
\label{rem:loc-prof}
For a code $\code$ with representation $\codepts$, let $\codeptsj{j}\subset\codepts$ be the subset of coordinates having locality $j$ for $1\leq j\leq \ra$. If $\nj{j} = 0$ for some $j$, then we set $\codeptsj{j} = \emptyset$. For $1\leq j\leq \ra$, we define 
\begin{equation}
\label{eq:k-j}
\kj{j} =  \rnk{\cup_{i=0}^{j}\codeptsj{i}} - \rnk{\cup_{i=0}^{j-1}\codeptsj{i}},
\end{equation}
where we set $\codeptsj{0} = \emptyset$. Define $r = \max\{j : \kj{j} > 0\}$. Then, $\{\kj{1}, \ldots, \kj{r}\}$ can be considered as the information locality profile of $\code$. Codes with the same all symbol locality profile can have different information locality profiles. 
\end{remark}

\subsection{Bound on the Minimum Distance}
\label{sec:all-sym- dist-bound}

Note that codes with unequal localities for all symbols are a special class of codes with unequal information localities. Therefore, the minimum distance upper bound in~\eqref{eq:dist-bound-info-loc} holds for an all symbol locality code having information locality profile $\kprof$. As noted in Remark~\ref{rem:loc-prof}, it is possible for a code to have different information locality profiles for a given all symbol locality profile. The upper bound in~\eqref{eq:dist-bound-info-loc} obtained using only information locality profile may not be tight for certain information localities. Our goal is to compute an upper bound on the minimum distance as a function of all symbol locality profile. 

\begin{theorem}
\label{thm:all-symbol-loc-bound}
Consider a code $\code$ with all symbol locality profile $\nprof = \{\nj{1},\ldots,\nj{\ra}\}$. Define $\kpj{j} = \nj{j} - \ceillr{\frac{\nj{j}}{j+1}}$. Let $r' = \max\{1\leq i \leq \ra : \sum_{j=1}^{i}\kpj{j} < k\}$. Let $r = \min\{r'+1\leq j\leq \ra : \nj{j}\geq 2\}$. Then, we have
\begin{equation}
\label{eq:all-symbol-loc-bound}
d \leq n - k + 2 - \sum_{j=1}^{r-1}\ceillr{\frac{\nj{j}}{j+1}} - \ceillr{\frac{k - \sum_{i=1}^{r-1}\left(\nj{j} - \ceillr{\frac{\nj{j}}{j+1}}\right)}{r}}.
\end{equation}
\end{theorem}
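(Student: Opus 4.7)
The plan is to mirror the proof of Theorem~\ref{thm:dist-bound-info-loc}: invoke Fact~\ref{fact} by constructing a set $\setS \subseteq \codepts$ with $\rnk{\setS} \leq k - 1$ and lower-bounding $|\setS|$, so that $|\setS| \leq n - d$ yields the stated upper bound on $d$. The novelty in the all-symbol setting is that \emph{every} coordinate (not only information coordinates) carries a bounded repair group, so the greedy procedure can absorb many coordinates per unit of rank, especially at small localities; that is exactly what makes $\kpj{j} = \nj{j} - \ceillr{\nj{j}/(j+1)}$ the right per-class rank ceiling in the statement.

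Specifically, I would run a variant of Algorithm~\ref{alg:find-S} that at each iteration picks a coordinate $\cwi \in \codepts \setminus \setSi{i-1}$ of smallest locality and appends $\cwset{\gam{i}}$, truncating to a subset on the final step to keep $\rnk{\setS}\leq k-1$. Exactly as in the proof of Theorem~\ref{thm:dist-bound-info-loc}, one obtains $\si \leq \loc{\cwi}+1$ and $\ti \leq \loc{\cwi}$, so $|\setS| \geq k - 2 + l$, where $l$ is the total number of iterations; I would adopt this weaker Case~2 accounting throughout, since it is universally valid and matches the form of \eqref{eq:all-symbol-loc-bound}. The problem then reduces to lower-bounding $l$.

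The main counting argument walks through locality classes in order $j=1,2,\ldots$. Suppose that by the start of the locality-$j$ rounds every coordinate of locality $<j$ is already in $\setS$. Each iteration that picks a locality-$j$ coordinate appends its size-$(j+1)$ repair group, so at most $j+1$ new locality-$j$ coordinates enter $\setS$ per such iteration; therefore at least $\ceillr{\nj{j}/(j+1)}$ iterations of type $j$ are needed to absorb all $\nj{j}$ locality-$j$ coordinates, and the rank contributed by those iterations is at most $\kpj{j}$. Summing over $j=1,\ldots,r-1$, the accumulated rank is at most $\sum_{j=1}^{r-1}\kpj{j}$, which by the definition of $r'$ is strictly less than $k$; the algorithm must therefore continue into locality $r$. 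The hypothesis $\nj{r}\geq 2$ ensures that a locality-$r$ coordinate in the next round genuinely sits in a size-$(r+1)$ repair group (so $\ti \leq r$ in that iteration), and each such iteration raises the (virtual) rank by at most $r$. Closing the remaining deficit of $k - \sum_{j=1}^{r-1}\kpj{j}$ therefore demands at least $\ceillr{(k - \sum_{j=1}^{r-1}\kpj{j})/r}$ further iterations. Combining gives $l \geq \sum_{j=1}^{r-1}\ceillr{\nj{j}/(j+1)} + \ceillr{(k-\sum_{j=1}^{r-1}\kpj{j})/r}$, and plugging into $|\setS| \geq k - 2 + l$ together with $|\setS| \leq n-d$ produces \eqref{eq:all-symbol-loc-bound}.

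The main obstacle will be the cross-locality bookkeeping: a locality-$j$ iteration may sweep in repair partners of strictly higher locality, so when the algorithm later enters those higher classes, some of their coordinates are already in $\setS$ and fewer iterations might appear to be needed there. I would argue that any such spillover only \emph{decreases} the average number of locality-$j$ coordinates absorbed per type-$j$ iteration (since a repair group containing higher-locality coordinates contributes fewer locality-$j$ members), so the per-class lower bound $\lj{j} \geq \ceillr{\nj{j}/(j+1)}$ is still valid, and the total $l$ only grows under cross-contamination. The definition $r = \min\{r'+1 \leq j \leq \ra : \nj{j}\geq 2\}$, rather than simply $r = r'+1$, is exactly the technical adjustment that makes this bookkeeping close cleanly: it skips over locality classes just above $r'$ that are too thin to supply a fresh repair relation for the decisive final iteration.
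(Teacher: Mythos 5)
Your overall strategy matches the paper's: build $\setS$ with $\rnk{\setS} \leq k-1$ via the greedy Algorithm~\ref{alg:find-S}, lower-bound $|\setS|$, and apply Fact~\ref{fact}. But there are two gaps in your execution that the paper's own proof is at pains to close.

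First, the combination of "weak Case~2 accounting throughout" with the deficit $k - \sum_{j<r}\kpj{j}$ is internally inconsistent. The deficit you wish to close is $k$ minus rank already accumulated, but the algorithm as written terminates when $\rnk{\setS}$ reaches $k-1$, and this can happen in two ways. When the last added repair group $\cwset{\gam{l}}$ would push the rank to $k$ (the paper's Case~2), the ``virtual'' rank supplied by the $l$ iterations is indeed $k$, so $l_r \geq \ceillr{(k-\sum_{j<r}\kpj{j})/r}$, and you pair that with $|\setS| \geq k-2+l$. But when the algorithm terminates with a full $\gam{l}$ bringing the rank to exactly $k-1$ (the paper's Case~1), the rank supplied is only $k-1$, so the iteration count you can extract is $l_r \geq \ceillr{(k-1-\sum_{j<r}\kpj{j})/r}$; your claimed bound on $l$ fails by one here. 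The paper compensates by using the tighter accounting $|\setS|\geq k-1+l$ in that case. You cannot pick the weaker bound in both slots simultaneously; you must track the two cases (as the paper does) or make a virtual-iteration argument that restores the lost unit.

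Second, and more substantively, your sketch silently assumes that the greedy process reaches exactly locality $r$, having extracted rank $\kpj{j}$ from each class $j < r$. But the \emph{actual} rank increment $\kj{j}$ from locality-$j$ coordinates can be strictly smaller than $\kpj{j}$, in which case the algorithm must press on into localities $> r$ and terminate at some $m > r$. The resulting lower bound $|\setS|_{LB}$ then has a different form, and one must show it is never smaller than the one you computed for the $m = r$ case. The paper handles this with an explicit comparison argument (the $\setS$ versus $\setSp$ contradiction showing $|\setSp|_{LB} \leq |\setS|_{LB}$, i.e., that assigning the maximal incremental rank $\kpj{j}$ to each class gives the weakest lower bound); this is the core technical content of the proof beyond Theorem~\ref{thm:dist-bound-info-loc}. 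Your ``cross-locality bookkeeping'' paragraph is aimed at a different (also real, but less central) concern --- that repair partners spill across locality classes --- and doesn't touch the issue of under-contributing rank classes. Relatedly, your reading of the $\nj{r}\geq 2$ condition is off: its role is not to ensure a fresh repair relation for a final iteration, but simply to guarantee $\kpj{r} \geq 1$ so that $r$ is well-defined and the thin classes with $\kpj{j}=0$ are skipped; in fact the definition of $r'$ already forces $r = r'+1$ whenever $r' < \ra$.
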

\begin{proof}
Similar to information locality case, we consider Algorithm~\ref{alg:find-S} to find a set $\setS\subset\codepts$ such that $\rnk{\setS}\leq k-1$. 

Recall that $\codeptsj{j}\subset\codepts$ is a subset of coordinates of locality $j$. Let $\kj{j} = \rnk{\cup_{i=0}^{j}\codeptsj{i}} - \rnk{\cup_{i=0}^{j-1}\codeptsj{i}}$, where we define $\codeptsj{0} = \emptyset$. 

It is easy to show that $\kj{j} \leq \kpj{j}$ for each $1\leq j\leq r$. In particular, consider the following greedy algorithm. Beginning with $\setTi{0}=\emptyset$ until $\setTi{p} = \codepts{j}$, in each iteration $p$, extend $\setTi{p-1}$ as by adding a coordinate $\vect{c}_p\in\codeptsj{j}\setminus\setTi{p-1}$ and all its repair group coordinates $\cwset{\rep{}{p}}$ to $\setTi{p-1}$. Specifically, $\setTi{p} = \setTi{p-1}\cup(\cwset{\gam{p}}\setminus\setTi{p-1})$. Now, in each iteration there must be at least one linear dependency between $\setTi{p-1}$ and $\cwset{\gam{p}}\setminus\setTi{p-1}$. Further, in each iteration, we extend the size of $\setT$ by at most $j$,  and thus, the number of iterations are at least $\ceillr{\frac{\nj{j}}{j+1}}$. Therefore, the number of linear dependencies among the coordinates in $\codeptsj{j}$ must be at least $\ceillr{\frac{\nj{j}}{j+1}}$.

{\it Case 1:} Suppose we have $\rnk{\setSi{i-1} \cup \cwset{\gam{i}}} \leq k-1$ throughout. Let $m$ be the locality of the last symbol picked by the algorithm. For $1\leq j\leq m-1$, the algorithm collects all the coordinates of locality $j$. Let $\nhj{m}\leq\nj{m}$ be the number of coordinates of locality $m$ that are collected by the algorithm. Then, we have
$$ |\setS| = \nj{1} + \cdots + \nj{m-1} + \nhj{m}.$$ 
Note that $\rnk{\setS}$ when $\setS$ has accumulated all the coordinates of locality up to $m-1$ is $\rnk{\cup_{j=1}^{m-1}\codeptsj{j}} = \sum_{j=1}^{m-1}\kj{j}$. Therefore, the rank accumulated from locality $m$ coordinates is $(k-1)-\sum_{j=1}^{m-1}\kj{j} :=\khj{m}$. Now, using standard arguments similar to the proof of Theorem~\ref{thm:dist-bound-info-loc}, it is easy to show that $\nhj{m} \geq \khj{m} + \ceillr{\frac{\khj{m}}{m}}$. Therefore,
\begin{equation}
\label{eq:S-LB-1}
|\setS| \geq \sum_{j=1}^{m-1} \nj{j}+ \khj{m} + \ceillr{\frac{\khj{m}}{m}} := |\setS|_{LB}.
\end{equation}

Next, we show that $|\setS|_{LB}$ is minimized when $\kj{j} = \kpj{j}$. Let $\setSp$ be the set collected if $\rnk{\cup_{i=0}^{j}\codeptsj{i}} - \rnk{\cup_{i=0}^{j-1}\codeptsj{i}} = \kpj{j}$. In this case the locality of the last coordinate must be $r$ provided $\sum_{j=1}^{r-1}\kpj{j} < k-1$. Let $\nhpj{r}$ be the number of coordinates of locality $r$ that are collected by the algorithm. (If $\sum_{j=1}^{r-1}\kpj{j} = k-1$, then $\nhpj{r}=0$ and the following analysis still holds.) Then, we have
$$|\setSp| = \nj{1} + \ldots + \nj{r-1} + \nhj{r}.$$
The rank accumulated in locality $r$ coordinates is $(k-1)-\sum_{j=1}^{r-1}\kpj{j} :=\khpj{r}$. Again, using standard arguments similar to the proof of Theorem~\ref{thm:dist-bound-info-loc}, it is easy to show that $\nhpj{r} \geq \khpj{r} + \ceillr{\frac{\khpj{r}}{r}}$. Therefore,
\begin{equation}
\label{eq:S-LB-2}
|\setSp| \geq \sum_{j=1}^{r-1} \nj{j}+ \khpj{r} + \ceillr{\frac{\khpj{r}}{r}} := |\setSp|_{LB}.
\end{equation}

Next, we show that $|\setSp|_{LB} \leq |\setS|_{LB}$. Suppose, for contradiction, $|\setSp|_{LB} > |\setS|_{LB}$. First, note that since $\kpj{j} \geq kj{j}$ for $1\leq j\leq r$, we have $r\leq m$. 

Case (1a): $m = r$. Then, we have
$$\sum_{j=1}^{r-1} \nj{j}+ \khpj{r} + \ceillr{\frac{\khpj{r}}{r}} > \sum_{j=1}^{r-1} \nj{j}+ \khj{r} + \ceillr{\frac{\khj{r}}{r}}.$$
However, this essentially implies $\sum_{j=1}^{r-1}\kpj{j} < \sum_{j=1}^{r-1}\kj{j}$, which is a contradiction.

Case (1b): $m < r$. Then, we have
$$\sum_{j=1}^{r-1} \nj{j}+ \khpj{r} + \ceillr{\frac{\khpj{r}}{r}} > \sum_{j=1}^{r-1} \nj{j}+ \nj{r} + \cdots + \khj{r} + \ceillr{\frac{\khj{r}}{r}}.$$
However, this implies $\khpj{r} + \ceillr{\frac{\khpj{r}}{r}} > \nj{r} + \cdots + \khj{r} + \ceillr{\frac{\khj{r}}{r}}$, which is a contradiction as $\khpj{r} + \ceillr{\frac{\khpj{r}}{r}} \leq \nhpj{r} \leq \nj{r}$.

Hence, to get smallest lower bound on $|\setS|$, one can assign maximum incremental rank $\kpj{j}$ to each locality $j$. Let $\lj{j}$ be the number of iterations during which Algorithm~\ref{alg:find-S} collects coordinates of locality $j$. Then, using the same arguments as in the proof of Theorem~\ref{thm:dist-bound-info-loc}, we have $|\setS| \geq k - 1 + \sum_{j=1}^{r}\lj{j}$ (see~\eqref{eq:S-case-1}). For $1\leq j\leq r-1$, the algorithm collects all the $\nj{j}$ coordinates of locality $j$. When a coordinate of locality $j$ is picked, the size of $\setS$ can be increased by at most $j+1$ in that iteration. Thus, $\lj{j} \geq \ceillr{\frac{\nj{j}}{j+1}}$ for $1\leq j \leq r-1$. For locality $r$, we increment the rank of $\setS$ by $(k-1)-\sum_{j=1}^{r-1}\kpj{j}$. At each step, tank is increased by at most $r$, thus $\lj{r} \geq \ceillr{\frac{(k-1)-\sum_{j=1}^{r-1}\kpj{j}}{r}}$. Hence,
$$|\setS| \geq k - 1 + \sum_{j=1}^{r-1} \ceillr{\frac{\nj{j}}{j+1}} + \ceillr{\frac{(k-1)-\sum_{j=1}^{r-1}\kpj{j}}{r}} \geq k - 2 + \sum_{j=1}^{r-1} \ceillr{\frac{\nj{j}}{j+1}} + \ceillr{\frac{k-\sum_{j=1}^{r-1}\kpj{j}}{r}}.$$

{\it Case 2:} In the last step, we get $\rnk{\setSi{l-1} \cup \cwset{\gam{l}}} = k$. Analysis to show that the smallest lower bound on $|\setS|$ is obtained assigning maximum incremental rank $\kpj{j}$ to each locality $j$is similar to Case 1. 

Using the same arguments as in the proof of Theorem~\ref{thm:dist-bound-info-loc}, we have $|\setS| \geq k - 2 + \sum_{j=1}^{r}\lj{j}$ (see~\eqref{eq:S-case-2}). Following the same argument as Case 1, $\lj{j} \geq \ceillr{\frac{\nj{j}}{j+1}}$ for $1\leq j \leq r-1$. For locality $r$, we increment the rank of $\setS$ by $k-\sum_{j=1}^{r-1}\kpj{j}$. At each step, tank is increased by at most $r$, thus $\lj{r} \geq \ceillr{\frac{k-\sum_{j=1}^{r-1}\kpj{j}}{r}}$. Hence,
$$|\setS| \geq k - 2 + \sum_{j=1}^{r-1} \ceillr{\frac{\nj{j}}{j+1}} + \ceillr{\frac{k-\sum_{j=1}^{r-1}\kpj{j}}{r}}.$$
Finally, the result follows from using the Fact~\ref{fact}.
\end{proof}

\subsection{Code Construction}
\label{sec:codes-all-symbol}
We adapt the rank-metric codes based LRC construction in~\cite{Silberstein:13} for the unequal all symbol locality scenario. The idea is to first precode the information symbols with a rank-metric code (in particular, with Gabidulin codes), and then use maximum distance separable (MDS) codes to obtain local parities. We begin with a brief review of rank-metric codes.

\subsection{Rank-Metric Codes}
\label{sec:rank-metric}
Let $\GFnm{q}$ be the set of all $N\times m$ matrices over $\GF{q}$. The {\it rank distance} is a distance measure between elements $A$ and $B$ of $\GFnm{q}$ defined as $\dr{A}{B} = \rnk{A-B}$. It can be shown that the rank distance is indeed a metric~\cite{Gabidulin:85}. A rank-metric code is a non-empty subset of $\GFnm{q}$ under the context of the rank metric. 

Typically, the rank-metric codes are considered by leveraging the correspondence between $\GF{q}^{1\times m}$ and an extension field $\GFm{q}$. By fixing a basis for $\GFm{q}$ as an $m$-dimensional vector space over $\GF{q}$, any element of $\GFm{q}$ can be represented as an $m$-length vector over $\GF{q}$. Similarly, any $N$-length vector over $\GFm{q}$ can be represented as an $N\times m$ matrix over $\GF{q}$. The rank of a vector $A\in\GFm{q}^N$ is the rank of $A$ as an $N\times m$ matrix over $\GF{q}$, which also works for the rank distance. This correspondence allows us to view a rank-metric code in $\GFnm{q}$ as a block code of length $N$ over $\GFm{q}$. 

Focussing on linear codes, an $(N,K,D)$ rank-metric code $\code\subseteq\GFm{q}^N$ is a linear block code over $\GFm{q}$ of length $N$, dimension $K$, and minimum rank distance $D$. For such codes, the Singleton bound becomes $d\leq \min\left\{1,\frac{m}{N}\right\}(N-K)+1$ (see~\cite{Gabidulin:85}). Codes that achieve this bound are called as maximum-rank distance (MRD) codes. Note that, for $m\geq N$, the Singleton bound for rank metric coincides with the classical Singleton bound for the Hamming metric. Indeed, when $m\geq N$, every MRD code is also MDS,  and hence can correct any $d-1$ {\it rank erasures}. 

{\bf Gabidulin Codes:} For $N\geq m$, a class of MRD codes was presented in~\cite{Gabidulin:85} by Gabidulin (see also~\cite{Delsarte:78}). 
A Gabidulin code can be obtained by evaluation of {\it linearized polynomials} defined as follows. A linearized polynomial $f(x)$ over $\GFm{q}$ of $q$-degree $K$ has the form $f(x) = \sum_{i=0}^{K} a_i x^{q^i}$, where $a_i\in\GFm{q}$ such that $a_K \neq 0$. Evaluation of a linearized polynomial is an $\GF{q}$-linear transform from $\GFm{q}$ to itself. In other words, for any $a,b \in \GF{q}$ and $x,y\in\GFm{q}$, we have $f(ax+by) = af(x)+bf(y)$. 

A codeword in an $(N,K,N-K+1)$ Gabidulin code $\code_{Gab}$ over $\GFm{q}$ for $m\geq N$ is defined as $\cw = \left(f(g_1),\ldots,f(g_N)\right) \in \GFm{q}^N$, where $f(x)$ is a linearized polynomial over $GFm{q}$ of $q$-degree $K-1$ whose coefficients are information symbols, and evaluation points $g_1,\ldots,g_N\in\GFm{q}$ are linearly  independent over $\GF{q}$. Note that since Gabidulin code is also an MDS code, it can correct any $N-K$ erasures.

\subsection{Code Construction}
\label{sec:construction}
In the following, we give a construction of an $\nkd$ LRC with all symbol locality profile $\nprof = \{\nj{1},\ldots,\nj{\ra}\}$ which attains the distance bound in~\eqref{eq:all-symbol-loc-bound}. For the simplicity of presentation, we assume that $j+1\mid\nj{j}$ for each $j$. One can generalize the construction for the case when this is not the case.

{\bf Construction 1.} Consider a length-$k$ vector of information symbols $\m \in \GFm{q}^k$.  First, we precode $\m$ using a Gabidulin code. Then, the codeword of the Gabidulin code is partitioned into local groups, and the local parities are computed for each group using MDS codes over $\GF{q}$. The details are as follows.

Define $\Nj{j} = \nj{j}\left(\frac{j}{j+1}\right)$ for each $j\in[\ra]$. Let $N = \sum_{j=1}^{\ra}\Nj{j}$. Encode $\m$ using an $(N,k,N-k+1)$ Gabidulin code to obtain $\cwG\in\GFm{q}^N$. Partition $\cwG$ into $\ra$ disjoint groups $\cwG = \cup_{j=1}^{\ra}\cwGj{j}$ such that $|\cwGj{j}| = \Nj{j}$ for $j\in[\ra]$ with $\cwGj{j} = \emptyset$ for each $j$ such that $\Nj{j} = 0$. For each $1\leq j\leq \ra$ such that $\Nj{j} > 0$, further partition $\cwGj{j}$ symbols into $\frac{\Nj{j}}{j}$ disjoint local groups each of size $j$, \ie, $\cwG{j} = \cup_{i=1}{\frac{\Nj{j}}{j}}\cwGj{j,i}$. For each group $\cwGj{j,i}$ of $j$ symbols, generate a local parity using a $(j+1,j,2)$ MDS code over $\GF{q}$. Denote the resulting code as $\code_{\textrm{LRC}}$. Note that the total number of symbols are $\sum_{j=1}^{\ra}\frac{\Nj{j}}{j}(j+1) = \sum_{j=1}^{\ra}\nj{j} = n$. Note that, we generate the local parities in such a way that $\code_{\textrm{LRC}}$ possesses all symbol locality profile $\{\nj{1},\ldots,\nj{\ra}\}$.

Next, we show that the above construction achieves the distance bound mentioned in Theorem~\ref{thm:all-symbol-loc-bound}.
\begin{theorem}
\label{thm:gabidulin-codes-based-LRC}
Let $\code_{\textrm{LRC}}$ be an $(n,k,d)$ LRC with all symbol locality profile $\{\nj{1},\ldots,\nj{\ra}\}$ obtained by Construction 1. If $j+1\mid\nj{j}$ for each $j\in[\ra]$, then $\code_{\textrm{LRC}}$ over $\GFm{q}$ for $m\geq \sum_{j=1}^{\ra}\nj{j}\left(\frac{j}{j+1}\right)$ and $q\geq \ra + 1$, achieves the bound in~\eqref{eq:all-symbol-loc-bound}.
\end{theorem}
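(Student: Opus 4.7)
The plan has two parts. First, I would verify that $\codeLRC$ has the all-symbol locality profile $\{\nj{1},\ldots,\nj{\ra}\}$. By construction, for each $j$ with $\nj{j}>0$, the $\nj{j}$ symbols assigned to locality $j$ are partitioned into $\nj{j}/(j+1)$ disjoint local groups, each of which forms a codeword of a $(j+1,j,2)$ MDS code over $\GF{q}$. Since that local code has minimum distance $2$, any $j$ of its $j+1$ symbols determine the remaining one; hence every symbol assigned to locality $j$ has local recoverability $j$, giving the required profile.

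For the distance, Theorem~\ref{thm:all-symbol-loc-bound} already provides the upper bound $d\leq d^{*}$, where $d^{*}$ denotes the right-hand side of~\eqref{eq:all-symbol-loc-bound}, so it suffices to prove $d\geq d^{*}$: for every erasure pattern $E\subseteq[n]$ with $|E|\leq d^{*}-1$, the columns of the generator of $\codeLRC$ indexed by $[n]\setminus E$ must span $\GFm{q}^{k}$. Each such column is either a Gabidulin column $g_{i}$ (for some $i\in[N]$) or a local parity $p_{\ell}$, which is a $\GF{q}$-linear combination of the Gabidulin columns in its local group $\ell$. I would analyze the per-group rank contribution via two-stage decoding: for a group $\ell$ of locality $j_{\ell}$ with $\Delta_{\ell}:=|E\cap\ell|$ erasures, if $\Delta_{\ell}\leq 1$ then the local MDS code recovers all $j_{\ell}$ Gabidulin columns in the group; if $\Delta_{\ell}\geq 2$, the surviving Gabidulin columns contribute as many directions as they number, and any surviving parity adds one further independent direction modulo them. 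This yields a per-group rank contribution of $j_{\ell}-\max(0,\Delta_{\ell}-1)$.

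Summing across groups, the total rank is at least $N-\sum_{\ell}\max(0,\Delta_{\ell}-1)$, so the code recovers $\m$ whenever the \emph{deficit} $\sum_{\ell}\max(0,\Delta_{\ell}-1)$ is at most $N-k$. I would then pose the dual adversary problem: minimize $\sum_{\ell}\Delta_{\ell}$ subject to deficit at least $N-k+1$ and $\Delta_{\ell}\leq j_{\ell}+1$. A fully-erased group of locality $j$ has deficit-to-erasure ratio $j/(j+1)$, which is increasing in $j$, so the greedy optimum completely fills groups of the highest localities first and only partially fills groups at the critical locality $r$ defined as in Theorem~\ref{thm:all-symbol-loc-bound}. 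Unwinding the ceiling terms at this boundary shows the minimum number of erasures needed to produce sufficient deficit is exactly $d^{*}$, establishing $d\geq d^{*}$; this calculation is essentially dual to the lower bound on $|\setS|$ in the proof of Theorem~\ref{thm:all-symbol-loc-bound}.

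The main obstacle is justifying the ``generically $+1$'' claim for surviving parities in damaged groups, uniformly over every erasure pattern of size at most $d^{*}-1$. The hypothesis $m\geq N$ makes the Gabidulin code Hamming-MDS, so any $k$ of its columns $g_{1},\ldots,g_{N}$ are linearly independent, which fixes the structure of spans of Gabidulin columns. The hypothesis $q\geq \ra+1$ then allows me to choose each local $(j+1,j,2)$ MDS code as a short Reed--Solomon whose $\GF{q}$-coefficients avoid the finitely many bad alignments in which a surviving parity would accidentally lie in the span of the already-counted columns. A Schwartz--Zippel-style argument confirms such a choice exists, validating the rank lower bound for every relevant $E$ and completing the proof.
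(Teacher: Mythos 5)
Your overall plan—verifying the locality profile, reducing the distance claim to showing that every pattern of at most $d^{*}-1$ erasures leaves a decodable set, computing a per-group ``rank deficit'' of $\max(0,\Delta_\ell-1)$, and then optimizing an adversary over local groups—matches the paper's structure, including the observation that high-locality groups maximize deficit per erasure. But the step you yourself identify as the main obstacle is exactly where your argument breaks, and the fix you propose is both unnecessary and unlikely to work at the stated field size.

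The paper does not need any genericity or Schwartz--Zippel argument, and it does not rely on the Gabidulin code merely being Hamming-MDS. The decisive point is the $\GF{q}$-linearity of the linearized polynomial $f$: each local parity equals $\sum_{p} a_p f(g_{i_p}) = f\bigl(\sum_p a_p g_{i_p}\bigr)$, so \emph{every} symbol in a local group of size $j+1$, parity included, is an evaluation of $f$ at a point of $\GFm{q}$. Because the local $(j+1,j,2)$ MDS code has all coefficients $a_p$ nonzero and the group's $g_{i_p}$'s are $\GF{q}$-linearly independent (here $m\geq N$ is used), \emph{any} $j$ of these $j+1$ evaluation points are linearly independent over $\GF{q}$. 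Hence after any $\Delta_\ell\geq 2$ erasures in the group, the $j+1-\Delta_\ell$ surviving symbols are evaluations at $j+1-\Delta_\ell$ independent points, and because the $N$ base points $g_1,\ldots,g_N$ are globally independent the per-group contributions add up exactly. This gives the ``$+1$ for a surviving parity'' deterministically, for every erasure pattern simultaneously, with no choice of coefficients to be made. Your proposed union bound over erasure patterns cannot be closed with $q\geq \ra+1$: the number of patterns grows like $\binom{n}{d^{*}-1}$, so a Schwartz--Zippel argument would force $q$ to scale with $n$, which contradicts the theorem statement. In short, you have the right accounting but the wrong mechanism; the proof hinges on the rank-metric (MRD) structure and $\GF{q}$-linearity of Gabidulin codes, not on a generic choice of local parities.
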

\begin{proof}
Similar to~\cite{Silberstein:13}, the idea is show that any $e : = n - k + 1 - \sum_{j=1}^{r-1}\ceillr{\frac{\nj{j}}{j+1}} - \ceillr{\frac{k - \sum_{i=1}^{r-1}\left(\nj{j} - \ceillr{\frac{\nj{j}}{j+1}}\right)}{r}}$ symbol erasures correspond to $N-K$ rank erasures, which can be corrected by the Gabidulin code. 

The $\GF{q}$-linearity of the linearized polynomials plays a crucial role. In particular, since the local parities are obtained using an MDS code over $\GF{q}$, any symbol $\cwi$ of locality $j$ can be written as $\cwi = \sum_{p=1}^{j} a_p\cw_{i_p} =  \sum_{p=1}^{j} a_p f(g_{i_p}) = f\left(\sum_{p=1}^{j} a_p g_{i_p}\right)$. Hence, for each $j\in[\ra]$, in a local group of size $j$, any $m\leq j$ symbols are evaluations of $f(x)$ in $m$ points that are linearly independent over $\GF{q}$. Therefore, for each $j\in[\ra]$, in a local group of size $j+1$, any $i+1 (\leq j+1)$ symbol erasures correspond to $i$ rank erasures. Moreover, taking any $j$ points from all local groups of size $j+1$ for each $j\in[\ra]$, we obtain the Gabidulin codeword, which has obtained by precoding $\m$.

With above observation, the worst case erasure pattern is when the erasures occur in the smallest possible number of local groups (of possibly different localities), and the number of erasures in each local group are maximal.

Note that we can write $n$ as $n = \sum_{j=1}^{\ra}\Nj{j} + \frac{\Nj{j}}{j}$. Let $k = \sum_{j=1}^{r-1} \Nj{j} + \Npj{r}$ for some $\Npj{r} < \Nj{r}$. Then, we can write 
\begin{equation}
\label{eq:erasures}
e = 1 + \sum_{j=r}^{\ra}\left(\Nj{j} + \frac{\Nj{j}}{j}\right) -  \left(\Npj{r}  + \ceillr{\frac{\Npj{r}}{r}}\right).
\end{equation} 
On the other hand, for the outer Gabidulin code, we have 
\begin{equation}
\label{eq:rank-distance}
N-k = \sum_{j=r}^{\ra}\Nj{j} - \Npj{r}.
\end{equation} 

{\it Case 1:} $r\mid\Npj{r}$. Let $\Npj{r} = r\beta$. Then, from~\eqref{eq:erasures}, we have $e = 1 + \sum_{j=r+1}^{\ra}(j+1)\left(\frac{\nj{j}}{j+1}\right) + (r+1)\left(\frac{\nj{r}}{r+1} - \beta\right)$. Thus, in the worst case, the number of local groups that are completely erased are $\sum_{j=r+1}^{\ra}\left(\frac{\nj{j}}{j+1}\right) + \left(\frac{\nj{r}}{r+1} - \beta\right)$ with one erasure in an additional group. Recall that, due to the $\GF{q}$-linearity, any $i+1$ erasures in a local group of size $j+1$, the number of rank erasures corresponding to the Gabidulin codeword are only $j$. Thus, total number of rank erasures are $\sum_{j=r+1}^{\ra}j\left(\frac{\nj{j}}{j+1}\right) + r\left(\frac{\nj{r}}{r+1} - \beta\right)$.

However, from~\eqref{eq:rank-distance}, we get $N-k = \sum_{j=r+1}^{\ra}j\left(\frac{\nj{j}}{j+1}\right) + r\left(\frac{\nj{r}}{r+1} - \beta\right)$. Therefore, all the rank erasures can be corrected by the outer Gabidulin code.

{\it Case 2:} $r\nmid\Npj{r}$. Let $\Npj{r} = r\beta + \gamma$, where $1\leq\gamma\leq r-1$. Then, from~\eqref{eq:erasures}, we have $e = 1 + \sum_{j=r+1}^{\ra}(j+1)\left(\frac{\nj{j}}{j+1}\right) + (r+1)\left(\frac{\nj{r}}{r+1} - \beta - 1\right) + (r-\gamma+1)$. In other words, in the in the worst case, the number of local groups that are completely erased are $\sum_{j=r+1}^{\ra}\left(\frac{\nj{j}}{j+1}\right) + \left(\frac{\nj{r}}{r+1} - \beta - 1\right)$ with $(r-\gamma+1)$ erasures in an additional group. This corresponds to $\sum_{j=r+1}^{\ra}j\left(\frac{\nj{j}}{j+1}\right) + r\left(\frac{\nj{r}}{r+1} - \beta - 1\right) + (r - \gamma)$ rank erasures. 

From~\eqref{eq:rank-distance}, we get $N-k = \sum_{j=r+1}^{\ra}j\left(\frac{\nj{j}}{j+1}\right) + r\left(\frac{\nj{r}}{r+1} - \beta - 1\right) + (r - \gamma)$. Hence, all the rank erasures can be corrected by the outer Gabidulin code.
\end{proof}

\section{Information Locality Requirement}
\label{sec:loc-req}

In general, one can design codes for different locality profiles, which gives rise to the following natural question: how to choose a locality profile that gives largest minimum distance. Towards this, we define a notion of {\it locality requirement} as follows.

\begin{definition}
\label{def:loc-req}
Let $\kprofreq = \{\ktj{1}, \ldots, \ktj{r}\}$ be a length-$r$ vector for some $r < k$ such that for each $1\leq j\leq r$, we have $0\leq \ktj{j}\leq k$ and $\sum_{j = 1}^r \ktj{j} = k$. Consider a code $\code$ with information locality profile $\kprof = \{\kj{1}, \ldots, \kj{r'}\}$ for some $r' \leq r$. We say that $\code$ satisfies information locality requirement $\kprofreq$ if, for each $1\leq i\leq r$, we have $\sum_{j=1}^{i}\kj{j} \geq \sum_{j=1}^{i}\ktj{j}$, where we set $\kj{j} = 0$ for $r'+1\leq j \leq r$ if $r' < r$. Further, in this case, we say that locality profile $\kprof$ respects locality requirement $\kprofreq$, and denote this as $\kprof \succeq \kprofreq$.
\end{definition}

Different locality profiles can respect a locality requirement $\kprofreq$, and one can ask which locality profile would give larger minimum distance. For example, let $\kprofreq = \{0, 3, 3\}$. Then, one can find a number of locality profiles that respect $\kprofreq$, such as $\kprof_1 = \{2,4,0\}$, $\kprof_2 = \{3,0,3\}$, $\kprof_3 = \{0, 6, 0\}$, $\kprof_4 = \{1,2,3\}$. Among these, the last two locality profiles would give the largest minimum distance.  However, in general, since a large number of locality profiles can respect a locality requirement, it is not clear how to find an optimal locality profile with respect to minimum distance. 
 
Give a locality requirement $\kprofreq$, we are interested in finding a locality profile $\kprof \succeq \kprofreq$ which results in largest upper bound on the minimum distance for fixed $n$. More formally, we can define the problem as follows.

\begin{IEEEeqnarray}{lCr}
\label{eq:sum-of-ceils}
\min_{\kprof\in\mathbb{Z}_{+}^{r}} & \sum_{j=1}^r \ceillr{\frac{\kj{j}}{j}} & {(P1)}\\
{\textrm{s.t.}} & \sum_{j=1}^{i}\kj{j} \geq \sum_{j=1}^{i}\ktj{j}, & \:\:{\textrm{for}}\:\: 1\leq i\leq r,\\
{\textrm{and}} & \sum_{j=1}^{r} \kj{j} = \sum_{j=1}^{r}\ktj{j} & {}.
\end{IEEEeqnarray} 
 
A solution of the above optimization problem is said to be an {\it optimal} locality profile. In the following we give a greedy algorithm which finds an optimal $\kprofopt$. From $\sum_{j=1}^{r-1}\kj{j}\geq\sum_{j=1}^{r-1}\ktj{j}$ and $\sum_{j=1}^{r}\kj{j} = \sum_{j=1}^{r}\ktj{j}$, we get that $\kj{r}\leq\ktj{r}$. In similar way, we can see that the inequality constraints above can be replaced by $\sum_{j=i}^{r}\kj{j} \geq \sum_{j=i}^{r}\ktj{j}$ and $\sum_{j=1}^{r}\kj{j} = \sum_{j=1}^{r}\ktj{j}$. The idea of the algorithm is to start with the largest locality $r$ and set $\ksj{r}$ as the largest multiple of $r$ such that $\ksj{r} \leq \ktj{r}$. Move the residue $\ktj{r} - \ksj{r}$ to the next locality $r-1$, and set $\ksj{r-1}$ as the largest multiple of $r$ such that $\ksj{r-1} \leq \ktj{r-1}+\ktj{r}-\ksj{r}$. We continue this until we reach locality 1.

\begin{algorithm}[!t]
\caption{Find an optimal locality profile $\kprofopt$ for a given locality requirement $\kprofreq$}
\label{alg:find-k-opt}
\begin{algorithmic}[1]
\STATE{Set $\gj{r+1} = 0$, $j = r$} 
\WHILE{$j\geq 1$}
	\STATE{Chose integers $\bj{j}$ and $\gj{j}$ such that $\ktj{j} + \gj{j+1} = j\bj{j} + \gj{j}$}
        \STATE{Set $\ksj{j} = j\bj{j}$}
	\STATE{Decrement $j$}	
\ENDWHILE
\end{algorithmic}
\end{algorithm}

\begin{remark}
\label{rem:greedy-ksj}
Note that Algorithm~\ref{alg:find-k-opt} assigns $\ksj{j} = \ktj{j} + \gj{j+1} - \gj{j}$ for each locality $j$. This gives $\sum_{j=i}^{r}\ksj{j} = \sum_{j=i}^{r}\ktj{j} - \gj{i}$ for each $r\geq i\geq 1$.
\end{remark}

\begin{theorem}
\label{thm:ks-is-opt}
Given an information locality requirement $\kprofreq$, the information locality profile $\kprofopt$ given by Algorithm~\ref{alg:find-k-opt} results in the largest upper bound on the minimum distance among all the information locality profiles that respect the given information locality requirement. 
\end{theorem}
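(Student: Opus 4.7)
The theorem asserts that Algorithm~\ref{alg:find-k-opt} produces an information locality profile respecting $\kprofreq$ that maximizes the distance bound of Theorem~\ref{thm:dist-bound-info-loc} over all such profiles. Since the bound is $d \le n-k-\sum_{j}\lceil k_j/j\rceil+2$, this is equivalent to showing that $\kprofopt$ solves the combinatorial minimization (P1). My plan is to verify feasibility of $\kprofopt$ and then establish its optimality by induction on $r$.

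Feasibility follows directly from Remark~\ref{rem:greedy-ksj}. The loop at $j=1$ has divisor $1$, forcing $\gj{1}=0$, so $\sum_{j=1}^{r}\ksj{j}=\sum_{j=1}^{r}\ktj{j}$. For every $i\ge 2$, the identity $\sum_{j=i}^{r}\ksj{j}=\sum_{j=i}^{r}\ktj{j}-\gj{i}$ together with $\gj{i}\ge 0$ gives $\sum_{j=1}^{i-1}\ksj{j}\ge\sum_{j=1}^{i-1}\ktj{j}$, which is exactly the cumulative constraint in Definition~\ref{def:loc-req}.

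For optimality, I would induct on $r$. The base $r=1$ is immediate since $k_1=\ktj{1}$ is forced. For the inductive step, fix any feasible $\kprof$. The cumulative constraints and the total-sum equality force $k_r\le\ktj{r}$, and once $k_r$ is fixed the tail $(k_1,\ldots,k_{r-1})$ is a feasible solution to the sub-instance of (P1) with requirement $(\ktj{1},\ldots,\ktj{r-2},\ktj{r-1}+m)$ of max locality $r-1$, where $m=\ktj{r}-k_r\in\{0,\ldots,\ktj{r}\}$. Let $g(m)$ denote the greedy cost of this sub-instance; by the inductive hypothesis $\sum_{j<r}\lceil k_j/j\rceil\ge g(m)$, so the theorem reduces to the scalar inequality
\[\lceil(\ktj{r}-m)/r\rceil+g(m)\;\ge\;\bj{r}+g(\gj{r})\qquad \text{for all }m\in\{0,\ldots,\ktj{r}\},\]
with equality at $m=\gj{r}$ corresponding to the algorithm's choice $k_r=\ksj{r}$.

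The scalar inequality is the main obstacle, and I would handle it via a structural property of $g$. The key observation is that $g(m+(r-1))-g(m)=1$ exactly: adding $r-1$ symbols at level $r-1$ raises $\bj{r-1}$ by one without changing its residue $\gj{r-1}$, so every lower-level pod count is unaffected. It follows that the jumps of $g$ are uniformly spaced with gap $r-1$, hence $g(m_2)-g(m_1)\in\{\lfloor(m_2-m_1)/(r-1)\rfloor,\lceil(m_2-m_1)/(r-1)\rceil\}$ for $m_1\le m_2$. Combining this with the closed form of the ceiling $\lceil(\ktj{r}-m)/r\rceil$ (equal to $\bj{r}+1$ for $m<\gj{r}$, and $\bj{r}-\lfloor(m-\gj{r})/r\rfloor$ for $m\ge\gj{r}$) gives the scalar inequality in each regime: for $m<\gj{r}\le r-1$ we need $g(\gj{r})-g(m)\le 1$, which holds since $\gj{r}-m<r-1$; for $m\ge\gj{r}$ we need $g(m)-g(\gj{r})\ge\lfloor(m-\gj{r})/r\rfloor$, which follows from $\lfloor p/(r-1)\rfloor\ge\lfloor p/r\rfloor$. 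Substituting back yields $\sum_j\lceil k_j/j\rceil\ge\bj{r}+g(\gj{r})=\sum_j\bj{j}$, which together with Theorem~\ref{thm:dist-bound-info-loc} completes the proof.
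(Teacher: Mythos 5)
Your approach is genuinely different from the paper's. The paper proves optimality by an exchange argument: Lemma~\ref{lem:k'-to-k''} and Lemma~\ref{lem:k''-to-ks} (via Propositions~\ref{prop:last-index}--\ref{prop:move-to-left}) show that \emph{any} optimal profile can be transformed step by step into $\kprofopt$ without increasing $\sum_j\ceillr{\kj{j}/j}$, so the greedy output must itself be optimal. You instead run an induction on the maximal locality $r$, isolating the choice of $k_r$ (forced into $\{\ktj{r}-m : 0\le m\le\ktj{r}\}$ by the cumulative constraints), reducing the tail to a locality-$(r-1)$ sub-instance with requirement $(\ktj{1},\ldots,\ktj{r-2},\ktj{r-1}+m)$, and then closing with a scalar inequality about the greedy cost $g(m)$ of that sub-instance. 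This is a tidier decomposition, and the identity $g(m+(r-1))-g(m)=1$ is a clean structural fact about Algorithm~\ref{alg:find-k-opt} that the paper never isolates. Your feasibility check via Remark~\ref{rem:greedy-ksj} and the identification of the equality point $m=\gj{r}$ with $k_r=\ksj{r}$ are both correct.

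There is, however, a gap in the step ``it follows that the jumps of $g$ are uniformly spaced with gap $r-1$.'' The periodicity $g(m+(r-1))=g(m)+1$ alone does not rule out $g$ dipping inside a window of width $r-1$; both regimes of your scalar inequality quietly use that $g$ is non-decreasing (for $m\ge\gj{r}$ you need $g(m)-g(\gj{r})\ge\floorlr{(m-\gj{r})/(r-1)}$; for $m<\gj{r}$ you need $g(\gj{r})-g(m)\le 1$, and your justification ``$\gj{r}-m<r-1$'' should in any case be $\le r-1$, as $m=0$, $\gj{r}=r-1$ is possible). Monotonicity does hold, but it has to be argued, e.g.: take any feasible profile for the sub-instance with parameter $m+1$, let $j^*$ be the largest index with a nonzero coordinate, and decrement that coordinate by one; for $i\ge j^*$ the cumulative sum equals the total, which drops from $\sum\ktj{j}+m+1$ to $\sum\ktj{j}+m$ and still dominates every partial sum of the requirement, so the profile is feasible for parameter $m$, and the objective can only decrease, giving $\mathrm{OPT}(m)\le\mathrm{OPT}(m+1)$; by the inductive hypothesis $g=\mathrm{OPT}$ on sub-instances, so $g$ is non-decreasing. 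With that inserted, your induction closes and yields the paper's theorem by a shorter and arguably more transparent route than the exchange-based Algorithm~\ref{alg:k''-to-ks}.
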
 
\begin{proof}
The idea is to show that any optimal information locality profile can be transformed into a form of $\kprofopt$ without loosing optimality. We first prove that it is always possible to obtain an optimal information locality profile $\kprofpp$ such that $j\mid\kppj{j}$ for each $j\in[r]$.

\begin{lemma}
\label{lem:k'-to-k''}
Given a locality requirement $\kprofreq$, any optimal information locality profile $\kprofp$ can be converted into another optimal information locality profile $\kprofpp$ such that $j \mid \kppj{j}$ $\forall j\in[r]$.
\end{lemma}
\begin{proof}
By induction on the number of localities $j$ such that $j \nmid \kpj{j}$. Let $|\{j : j\nmid\kpj{j}\}| = m$.

Basis step: $m = 1$. Let $\jp{m}\in[r]$ be the only locality such that $\jp{m}\nmid\kpj{\jp{m}}$. We can write $\kpj{\jp{m}} = \jp{m}\bj{\jp{m}} + \gj{\jp{m}}$ such that $1\leq\bj{\jp{m}}\leq\jp{m}-1$. Set $\kppj{\jp{m}} = \kpj{\jp{m}} - \gj{\jp{m}}$, $\kppj{\gj{\jp{m}}} = \kpj{\gj{\jp{m}}} + \gj{\jp{m}}$, and $\kppj{j} = \kpj{j}$ for all $j\in[r]$ such that $j\neq\jp{m}$, $j\neq\gj{\jp{m}}$.

First, observe that $\kprofpp$ is such that $j\mid\kj{j}$ for each $j\in[r]$, since $\gj{\jp{m}}\mid\kpj{\gj{\jp{m}}}$. 

Second, note that $\kprofpp$ is a feasible solution for (P1). This is because, for $1\leq i\leq \jp{m}-1$, we have $\sum_{j=1}^{i}\kppj{j}  = \sum_{j=1}^{i}\kpj{j} + \gj{\jp{m}} \geq \sum_{j=1}^{i}\ktj{j}$, and for $\jp{m}\leq i\leq r$, we have $\sum_{j=1}^{i}\kppj{j}  = \sum_{j=1}^{i}\kpj{j} \geq \sum_{j=1}^{i}\ktj{j}$. For both these cases, the inequality follows since $\kprofp$ satisfies the constraints of (P1).

Finally, it is easy to see that $\kprofpp$ is also optimal, since $\ceillr{\frac{\kppj{\jp{m}}}{\jp{m}}} = \ceillr{\frac{\kpj{\jp{m}}}{\jp{m}}} - 1$, $\ceillr{\frac{\kppj{\gj{\jp{m}}}}{\gj{\jp{m}}}} = \ceillr{\frac{\kpj{\gj{\jp{m}}}}{\gj{\jp{m}}}} + 1$, and $\ceillr{\frac{\kppj{j}}{j}} = \ceillr{\frac{\kpj{j}}{j}}$ for the rest of the localities.

Induction step: $m \geq 2$. Suppose the hypothesis holds whenever $|\{j : j\nmid\kpj{j}\}| \leq m-1$. Consider the case when $|\{j : j\nmid\kpj{j}\}| = m$. Denote such a set of localities as $\{\jp{1},\ldots,\jp{m}\}$, where $\jp{1} < \cdots < \jp{m}$. Now, we can write $\kpj{\jp{m}} = \jp{m}\bj{\jp{m}} + \gj{\jp{m}}$ such that $1\leq\bj{\jp{m}}\leq\jp{m}-1$. Set $\kpj{\jp{m}} = \kpj{\jp{m}} - \gj{\jp{m}}$, and $\kpj{\gj{\jp{m}}} = \kpj{\gj{\jp{m}}} + \gj{\jp{m}}$.

Similar to $m=1$ case, we can verify that $\kprofp$ remains to be an optimal solution to (P1) after the  transformation. Further, since $\jp{m}\mid\kpj{\jp{m}}$, we get $|\{j : j\nmid\kpj{j}\}| = m-1$. Then, the proof follows by the induction hypothesis.
\end{proof}

Let $|\{j : \kppj{j}\neq\ksj{j}\}| = m$. Denote such a set of localities as $\{\jp{1},\ldots,\jp{m}\}$, where $\jp{1} < \cdots < \jp{m}$. We first prove some properties for the localities where the coordinate values differ. 

\begin{proposition}
\label{prop:last-index}
$\kppj{\jp{m}} < \ksj{\jp{m}}$
\end{proposition}
\begin{proof}
Suppose, for contradiction, $\kppj{\jp{m}} > \ksj{\jp{m}}$. We can write $\kppj{\jp{m}} = \ksj{\jp{m}} + p\jp{m}$ for some integer $p\geq 1$, since both $\kppj{\jp{m}}$ and $\ksj{\jp{m}}$ are multiples of $\jp{m}$. Consider
\begin{IEEEeqnarray}{rCl}
\sum_{i=\jp{m}}^{r} \kppj{i} & = & \ksj{\jp{m}} + p\jp{m} + \sum_{i=\jp{m}+1}^{r} \ksj{i}\\
& = & \sum_{i=\jp{m}}^{r} \ktj{i} - \gj{\jp{m}} + p\jp{m}\\
& \geq & \sum_{i=\jp{m}}^{r} \ktj{i} - (\jp{m} - 1) + p\jp{m}\\ 
& \geq & \sum_{i=\jp{m}}^{r} \ktj{i} + (p - 1)\jp{m} + 1\\
& \geq & \sum_{i=\jp{m}}^{r} \ktj{i}.
\end{IEEEeqnarray}
However, this contradicts the feasibility of $\kprofpp$ as it should satisfy $\sum_{\jp{m}}^{r} \kppj{i}\leq \sum_{\jp{m}}^{r} \ktj{i}$ (due to $\sum_{i=1}^{\jp{m}} \kppj{i}\geq \sum_{i=1}^{\jp{m}} \ktj{i}$ and $\sum_{i=1}^{r} \kppj{i}  = \sum_{i=1}^{r} \ktj{i}$).
\end{proof}

Next, we show that for any information locality profile, moving the coordinates to the higher locality does not increase the minimum distance bound. 
\begin{proposition}
\label{prop:move-to-right}
Consider an information locality profile $\kprof$. For any locality pair $i$ and $j$ such that $i<j$ and $\kj{j}>0$. Set $\kj{i} = \kj{i} - \delta$ and $\kj{j} = \kj{j} + \delta$ for an integer $\delta$ such that either $i\mid\delta$ or $j\mid\delta$ (or both). Then, such a transformation does not increase the value of the minimum distance bound. 
\end{proposition}
\begin{proof}
Case 1: $i\mid\delta$. Let $\delta = ia$ for some integer $a$. After moving the coordinates of locality $i$ to locality $j$, the term $\ceillr{\frac{\kj{i}}{i}}$ reduces by $a$. Whereas, the term $\ceillr{\frac{\kj{j}}{j}}$ increases by at most $\ceillr{\frac{\delta}{j}}$, which itself is at most $a$. 

Case 2: $j\mid\delta$. Let $\delta = jb$ for some integer $b$. In this case, the term $\ceillr{\frac{\kj{j}}{j}}$ increases by $b$. Whereas, the term $\ceillr{\frac{\kj{i}}{i}}$ reduces by at least $\floorlr{\frac{\delta}{i}}$, which itself is at least $b$.

Therefore, in both the above case, the value of~\eqref{eq:sum-of-ceils} does not increase.
\end{proof}

Finally, we show that for any information locality profile, moving the coordinates to the lower locality to obtain divisibility does not change the minimum distance bound.
\begin{proposition}
\label{prop:move-to-left}
Consider an information locality profile $\kprof$. Let $j$ be a locality such that $j\nmid\kj{j}$, and let $\kj{j} = j\bj{j} + \gj{j}$ for some integers $\bj{j}$ and $1\leq\gj{j}\leq j-1$. Then, setting $\kj{j} = \kj{i} - \gj{j}$ and $\kj{\gj{j}} = \kj{\gj{j}} + \gj{j}$ does not change the value of the minimum distance bound. 
\end{proposition}
\begin{proof}
The argument is the same as for the basis step in the proof of Lemma~\ref{lem:k'-to-k''}.
\end{proof}

Finally, we show that we can transform an optimal information locality profile where divisibility holds for each locality into $\kprofopt$. 
\begin{lemma}
\label{lem:k''-to-ks}
Given a locality requirement $\kprofreq$, any optimal information locality profile $\kprofp$, where $j\mid\kppj{j}$ for each $j$, can be converted into $\kprofopt$ without loosing optimality, where $\kprofopt$ is the output of Algorithm~\ref{alg:find-k-opt}.
\end{lemma}
\begin{proof}
We give an iterative algorithm (Algorithm~\ref{alg:k''-to-ks}) to transform an optimal information locality profile $\kprofpp$ to $\kprofopt$. First note that, by Proposition~\ref{prop:last-index}, it must be that $\kppj{\jp{m}} < \ksj{\jp{m}}$ in the first iteration of the outer while-loop. Moreover, at line~\ref{inner-loop}, $\kprofpp$ is such that $j\mid\kppj{j}$ for each $j\in[r]$, hence we can invoke Proposition~\ref{prop:last-index} for the every iteration of outer while-loop. Next, the optimality of $\kprofpp$ is maintained at line~\ref{move-right} due to Proposition~\ref{prop:move-to-right}, and also at line~\ref{move-left} due to Proposition~\ref{prop:move-to-left}. Finally, Algorithm~\ref{alg:k''-to-ks} must terminate in finite time as $m$ decreases by at least 1 at line~\ref{inner-loop}. 
\end{proof}

\begin{algorithm}[!t]
\caption{Transform an optimal locality profile $\kprofpp$ to $\kprofopt$}
\label{alg:k''-to-ks}
\begin{algorithmic}[1]
\STATE{Let $|\{j : \kppj{j}\neq\ksj{j}\}| = m$} 
\WHILE{$m > 0$}
	\STATE{Let $\jp{m} = \max\{j : \kppj{j}\neq\ksj{j}\}$}
	\WHILE{$\kppj{\jp{m}} < \ksj{\jp{m}}$}
		\STATE{Let $\jp{p} = \max\{j : \kppj{j} > \ksj{j}\}$}
		\STATE{Let $\delta_{\jp{m}} = \ksj{\jp{m}} - \kppj{\jp{m}}$, 
				   $\delta_{\jp{p}} = \kppj{\jp{p}} - \ksj{\jp{p}}$}
		\label{move-right}
		\STATE{Set $\kppj{\jp{m}} = \kppj{\jp{m}} + \min\{\delta_{\jp{m}}, \delta_{\jp{p}}\}$, 
		                    $\kppj{\jp{p}} = \kppj{\jp{p}} - \min\{\delta_{\jp{m}}, \delta_{\jp{p}}\}$}
        		\IF{$\delta_{\jp{m}} < \delta_{\jp{p}}$}
			\STATE{Let $\kppj{\jp{p}} = \jp{p}\bj{\jp{p}} + \gj{\jp{p}}$}
			\IF{$\gj{\jp{p}}>0$}
				\label{move-left}
				\STATE{Set $\kppj{\jp{p}} = \kppj{\jp{p}} - \gj{\jp{p}}$, 
				                    $\kppj{\gj{\jp{p}}} = \kppj{\gj{\jp{p}}} + \gj{\jp{p}}$}
			\ENDIF
		\ENDIF
	\label{inner-loop}	
	\ENDWHILE
	\STATE{Set $m = |\{j : \kppj{j}\neq\ksj{j}\}|$}
\ENDWHILE
\end{algorithmic}
\end{algorithm}

The proof of Theorem~\ref{thm:ks-is-opt} follows from Lemma~\ref{lem:k'-to-k''} and Lemma~\ref{lem:k''-to-ks}

\end{proof}

\section*{Acknowledgment}
Swanand Kadhe would like to thank Ankit Singh Rawat for helpful discussions and for pointing out LRC constructions based on rank-metric codes; and also thank Anoosheh Heidarzadeh for helpful discussions, especially on locality requirement.

\bibliographystyle{IEEEtran}
\bibliography{Bib_unequal_LRC_v1,IEEEabrv}

\end{document}